\documentclass[conference]{IEEEtran}
\IEEEoverridecommandlockouts

\usepackage{amsmath,amssymb,amsfonts}
\usepackage{algorithmic}
\usepackage{algorithm} 
\usepackage{graphicx}
\usepackage{textcomp}
\usepackage{xcolor}
\usepackage{booktabs}
\usepackage{multirow}
\usepackage{subfig}
\usepackage{siunitx}
\usepackage{comment}
\def\BibTeX{{\rm B\kern-.05em{\sc i\kern-.025em b}\kern-.08em
    T\kern-.1667em\lower.7ex\hbox{E}\kern-.125emX}}

\newcommand{\best}[1]{\textbf{#1}}

\newtheorem{proposition}{Proposition}[section]

\newtheorem{assumption}{Assumption}[section]

\begin{document}

\title{\vspace{0.9em}
\linespread{0.8}\selectfont
Decoupling Geometric Planning and Execution\\
in Scalable Multi-Agent Path Finding
\thanks{This work has been supported by the ONR Global grant N62909-24-1-2081, Spanish project PID2024-159284NB-I00 funded by MCIN/AEI/10.13039/501100011033, by ERDF A way of making Europe and by the European Union NextGenerationEU/PRTR, and DGA T45-23R and T64-23R.}
}

\author{
\IEEEauthorblockN{
Fernando Salanova, 
Eduardo Montijano,
Cristian Mahulea
}
\IEEEauthorblockA{
Dept. of Systems Engineering and Computer Science, University of Zaragoza, Zaragoza, Spain\\
Emails: \{fsalanova, cmahulea, emonti\}@unizar.es
\vspace{-1.75em}
}
}

\maketitle

\begin{abstract}
Multi-Agent Path Finding (MAPF) requires collision-free trajectories for multiple agents on a shared graph, often with the objective of minimizing the sum-of-costs (SOC). Many optimal and bounded-suboptimal solvers rely on time-expanded models and centralized conflict resolution, which limits scalability in large or dense instances. We propose a hybrid prioritized framework that separates \emph{geometric planning} from \emph{execution-time conflict resolution}. In the first stage, \emph{Geometric Conflict Preemption (GCP)} plans agents sequentially with A* on the original graph while inflating costs for transitions entering vertices used by higher-priority paths, encouraging spatial detours without explicit time reasoning. In the second stage, a \emph{Decentralized Local Controller (DLC)} executes the geometric paths using per-vertex FIFO authorization queues and inserts wait actions to avoid vertex and edge-swap conflicts. Experiments on standard benchmark maps with up to 1000 agents show that the method scales with an near-linear runtime trend and attains a 100\% success rate on instances satisfying the geometric feasibility assumption. 
Page of the project: https://sites.google.com/unizar.es/multi-agent-path-finding/home

\end{abstract}

\section{Introduction}

Multi-Agent Path Finding (MAPF) is a fundamental problem in robotics, logistics and large-scale system control, including within the Discrete Event Systems (DES). Given multiple agents moving on a shared graph, the goal is to compute collision-free trajectories that avoid (i) \emph{vertex conflicts} (two agents occupying the same vertex at the same time) and (ii) \emph{edge-swap conflicts} (two agents traversing the same edge in opposite directions at the same time), while minimizing the sum-of-costs (SOC), i.e., the sum of goal-arrival times. MAPF is NP-hard due to the combinatorial growth of the joint state space, and optimal centralized solvers such as Conflict-Based Search (CBS) \cite{CBS} rely on explicit coupling across agents, resulting in prohibitive runtime and memory requirements at scale. From a DES perspective, related multi-robot coordination problems have also been addressed using formal supervisory control, e.g., Petri-net-based resource sharing and mutual exclusion~\cite{6648102}.

A key limitation of many MAPF formulations is the dependence on \emph{time-expanded} representations. These approaches discretize time and model each agent assuming synchronous evolution in discrete steps, so collision-avoidance constraints are enforced at every time index. In practice, this implies step-by-step synchronization. Such lockstep execution typically requires a centralized coordination layer (or an equivalent global consensus mechanism) because conflict detection and resolution depend on the simultaneous state of all agents. This modeling assumption is often convenient algorithmically, but it becomes increasingly problematic in large-scale robotic teams, where communication delays, asynchronous actuation, and distributed control architectures make frequent global synchronization costly or impractical.

This paper adopts an execution viewpoint in which agents progress asynchronously along geometric paths and synchronize only when local contention must be serialized. We introduce a two-stage prioritized framework. First, \emph{Geometric Conflict Preemption (GCP)} reduces downstream congestion by penalizing entries into vertices already used by higher-priority paths, favoring inexpensive geometric detours over future waiting. Second, a \emph{Decentralized Local Controller (DLC)} enforces collision-free execution via local per-vertex authorization queues, inserting wait actions only when necessary. The resulting separation between geometric planning and on-demand temporal coordination improves scalability while preserving competitive SOC in constrained environments.

\textbf{Contributions.} The main contributions are:
(i) a congestion-aware prioritized planner (GCP) that inflates \emph{vertex-entry} costs based on higher-priority
geometric footprints, without building a time-expanded graph;
(ii) a queue-based local execution mechanism (DLC) that guarantees collision-free asynchronous execution by
serializing only contested vertex entries; and
(iii) an empirical evaluation up to $1000$ agents showing near-linear scaling trends and improved SOC in
bottleneck-heavy maps via reduced waiting.

\section{Related Work} \label{sec:RW}

MAPF algorithms are commonly grouped into optimal solvers, bounded-suboptimal solvers, and highly scalable (often decoupled) solvers \cite{MAPF-Overview}. Optimal centralized methods, exemplified by \emph{Conflict-Based Search (CBS)} \cite{CBS}, guarantee minimum SOC solutions but can exhibit exponential runtime and high memory usage in dense instances, limiting applicability to large teams. To improve scalability, bounded-suboptimal methods trade optimality for speed while bounding solution quality. \emph{Enhanced CBS}~\cite{ECBS} and \emph{Explicit Estimation CBS}~\cite{EECBS} accelerate CBS using inadmissible heuristics and focal search, but remain centralized and still rely on conflict-tree exploration. As a result, scalability is typically limited to a few hundred agents in constrained environments.

For problems with several hundred agents or more, decoupled methods become essential due to their near-linear \emph{runtime} complexity with respect to the number of agents, since planning is performed sequentially or independently once higher-priority decisions are fixed. \emph{Prioritized MAPF} \cite{P-MAPF} is the representative approach, planning agents sequentially according to a fixed priority ordering. The main drawback is solution quality: higher-priority paths become hard constraints for subsequent agents, often causing cascaded waiting and SOC inflation. This inflation is exacerbated when time is modeled explicitly via space--time reservation tables, which enforce step-by-step synchronization and can introduce repeated waiting actions to resolve conflicts \cite{MAPF_Scalable}. Related challenges appear in continuous-time MAPF, where guaranteeing collision-free motion over real-valued time while maintaining completeness and soundness remains an active topic \cite{Optimal-ContinousTime},\cite{surynek2019multi}.

Our method belongs to the class of scalable, decoupled MAPF approaches, while explicitly targeting congestion mitigation to improve solution quality. In particular, it modifies costs to discourage geometric overlap, in the spirit of congestion-aware planning \cite{EvolutionPaths} and traffic-flow incentives in lifelong MAPF \cite{TrafficFlow}. Unlike iterative replanning or flow-based methods, \emph{Geometric Conflict Preemption (GCP)} applies a one-shot cost inflation derived from higher-priority geometric paths, without constructing a time-expanded graph. Residual conflicts are resolved during execution by a queue-based local controller. The resulting framework remains decoupled and scalable, and it is complete for instances satisfying certain geometric feasibility conditions. 

\section{Problem Definition} \label{sec:PD}
The environment is a connected undirected graph $G=(V,E)$ with nonnegative edge costs
$C:E\rightarrow\mathbb{R}_{\ge 0}$.
Unless stated otherwise, all edges have unit cost.
A set of agents $A=\{a_1,\dots,a_k\}$ is given, where each agent $a_i$ has a start vertex
$s_i\in V$ and a goal vertex $g_i\in V$.
We assume that all starts and all goals are pairwise distinct.

A \emph{path} for agent $a_i$ is a finite sequence of vertices
$p_i=(v_{i,0},v_{i,1},\dots,v_{i,\ell_i})$ such that $v_{i,0}=s_i$ and $v_{i,\ell_i}=g_i$. 
Paths are purely geometric and do not encode temporal information.

To define time execution, conflicts, and SOC, we use the standard discrete-time semantics of MAPF.
A \emph{trajectory} for agent $a_i$ is a function $\tau_i:\mathbb{N}_{\ge 0}\to V$ such that there exists a
finite $T_i\in\mathbb{N}_{\ge 0}$ satisfying $\tau_i(0)=s_i$, $\tau_i(T_i)=g_i$, and $\tau_i(t)=g_i$ for all
$t\ge T_i$. For each $t\ge 0$, either $(\tau_i(t),\tau_i(t+1))\in E$ or $\tau_i(t)=\tau_i(t+1)$.
This timing model is used only to formalize collisions and cost; both planning stages of our method
operate on geometric paths on $G$ nor they construct a time-expanded graph.

A set of trajectories $\mathcal{T}=\{\tau_1,\dots,\tau_k\}$ is \emph{collision-free} if for all
$i\neq j$ and all $t\ge 0$:
\begin{enumerate}
\item $\tau_i(t)\neq \tau_j(t)$ \hfill(\emph{vertex conflict}),
\item $(\tau_i(t),\tau_i(t+1))\neq (\tau_j(t+1),\tau_j(t))$ \hfill(\emph{edge-swap conflict}).
\end{enumerate}
The objective is to find collision-free trajectories minimizing the \emph{sum-of-costs} (SOC):
\begin{equation}
\min_{\mathcal{T}} \; SOC(\mathcal{T}) \triangleq \sum_{i=1}^{k} T_i.
\end{equation}

Most MAPF solvers enforce collision-freeness by planning directly in space--time, which couples agents
at every time index. In contrast, we plan geometrically on $G$ and defer temporal coordination to an
execution mechanism that introduces waiting only when required.

\begin{assumption}\label{ass:residual_reach}
Fix a priority ordering $\pi=(a_1,\ldots,a_k)$ and assume that agents remain at their goals after arrival. For each $i$, let $F_i=\{g_1,\dots,g_{i-1}\}$ and define the residual graph $G_i=(V\setminus F_i,\{(u,v)\in E \mid u,v\notin F_i\})$.
Then, for each agent $a_i$, its goal $g_i$ is reachable from its start $s_i$ in $G_i$.
\end{assumption}

Under Assumption~\ref{ass:residual_reach}, multiple sequential executions exist in which agents move one at a time in any priority order $\pi$ while all others wait, and then remain parked at their goals. Let $p_i^\star$ be a shortest path from $s_i$ to $g_i$ in $G_i$, and let $\ell_i^\star$ denote its cost. This induces a collision-free trajectory set $\mathcal{T}_{\mathrm{seq}}$ with
\begin{equation}
SOC(\mathcal{T}_{\mathrm{seq}}) \;=\; \sum_{i=1}^{k}\ell_{i}^\star \;\triangleq\;
SOC_{\mathrm{seq}}(\pi),
\end{equation}
and therefore $SOC^\star \le SOC_{\mathrm{seq}}(\pi)$, where $SOC^\star$ is the optimal MAPF cost.

\section{Proposed Method} \label{sec:method}

We propose a decoupled, prioritized MAPF framework that separates geometric planning from temporal
coordination, see Algorithm~\ref{alg:gcp_dlc_high_level}. Under a fixed priority ordering $\pi=(a_1,\dots,a_k)$, the method proceeds in two stages:
(i) \emph{Geometric Conflict Preemption} computes a set of time-independent paths
$\mathcal{P}=\{p_1,\dots,p_k\}$ by running prioritized A* on a cost-inflated residual graph; and
(ii) a \emph{Decentralized Local Controller} transforms $\mathcal{P}$ into a collision-free trajectory
set $\mathcal{T}=\{\tau_1,\dots,\tau_k\}$ by inserting waiting actions only when required to resolve local
contention (vertex and edge-swap conflicts) during execution.

\begin{algorithm}[t]
\caption{High-Level Algorithm}
\label{alg:gcp_dlc_high_level}
\begin{algorithmic}[1]
\REQUIRE Connected graph $G=(V,E)$, agents $A=\{a_1,\dots,a_k\}$, starts $\{s_i\}$, goals $\{g_i\}$, edge costs $C$, penalty $C_{\mathrm{p}}>0$
\ENSURE Paths $\mathcal{P}$ and collision-free trajectories $\mathcal{T}$, or \textsc{Fail}

\STATE $\mathcal{P}\gets\emptyset$;\quad $\mathcal{T}\gets\emptyset$
\STATE Initialize queues: $Q_v\gets [\,]$ for all $v\in V$

\COMMENT{GCP: prioritized geometric planning with goal removal and cost inflation}
\FOR{$i=1$ \TO $k$}
    \STATE $F_i := \{g_{1},\dots,g_{i-1}\}$
    \STATE $V_i \gets V\setminus F_i$
    \STATE $E_i \gets \{(u,v)\in E\mid u,v\in V_i\}$
    \STATE $G_i\gets (V_i,E_i)$
    \STATE Define inflated costs $C_i$ on $E_i$ as in \eqref{eq:inflated_cost}
    \IF{$s_i\notin V_i$ \OR $g_i\notin V_i$}
        \RETURN \textsc{Fail}
    \ENDIF
    \STATE $p_i \gets \textbf{A*}(G_i, s_i, g_i, C_i)$
    \IF{$p_i=\emptyset$}
        \RETURN \textsc{Fail}
    \ENDIF
    \STATE $\mathcal{P}\gets \mathcal{P}\cup\{p_i\}$
\ENDFOR

\COMMENT{DLC: offline queue construction from $\mathcal{P}$ and online execution}
\FOR{$i=1$ \TO $k$}
    \FOR{$r=1$ \TO $\ell_i$}
        \STATE $v \gets v_{i,r}$ \COMMENT{the $r$-th vertex visited after the start}
        \STATE Append $a_i$ to the tail of $Q_v$
    \ENDFOR
\ENDFOR

\STATE $\mathcal{T}\gets \textsc{DLC-Execute}(\mathcal{P},\{Q_v\}_{v\in V})$
\RETURN $(\mathcal{P},\mathcal{T})$
\end{algorithmic}
\end{algorithm}



\subsection{Geometric Conflict Preemption (GCP)}
We assume throughout this subsection that Assumption~\ref{ass:residual_reach} holds and a fixed priority ordering $\pi=(a_1,\ldots, a_k)$ is given.
The GCP stage computes a set of time-independent geometric paths sequentially in this order.
Its purpose is not to enforce collision-freeness directly, but to reduce downstream contention by
discouraging overlap with already planned higher-priority paths, without introducing any explicit
time-expanded reasoning.

At step $i$, suppose that geometric paths
$\mathcal{P}_{<i}=\{p_{1},\dots,p_{i-1}\}$ have already been computed.
For a path $p_{j}=(v_{j,0},\dots,v_{j,\ell_{j}})$, define its geometric footprint as
\[
\mathrm{nodes}(p_{j}) := \{v_{j,0},\dots,v_{j,\ell_{j}}\}.
\]

Since agents remain parked at their goals after arrival, goals of higher-priority agents are treated
as permanently occupied and are therefore excluded from subsequent planning.
Accordingly, for agent $a_i$ we plan on the residual graph $G_i$ defined in
Assumption~\ref{ass:residual_reach}.

To mitigate future congestion, we penalize transitions that enter vertices already traversed by higher-priority paths. The penalty is designed to reflect the \emph{expected temporal persistence} of higher-priority agents at those vertices during execution.

For each agent $a_j$ with a geometric path $p_j=(v_{j,0},v_{j,1},\dots,v_{j,\ell_j})$, and for each vertex
$v\in\mathrm{nodes}(p_j)$, define the (first) visit index
\[
r_j(v) := \min\{r\in\{0,\dots,\ell_j\}\mid v_{j,r}=v\},
\]
and set the vertex penalty $C_{\mathrm{p},j,v}:=r_j(v)$.
(If $v\notin \mathrm{nodes}(p_j)$, define $C_{\mathrm{p},j,v}:=0$.)

The choice of the visit index $r_j(v)$ is explained because conflicts are resolved via FIFO queues, a lower-priority agent $a_i$ reaching a vertex $v$ must wait until all higher-priority agents $a_j$ ($j < i$) have visited and emptied $v$. A vertex $v$ located late in a high-priority path $r_j(v)$, effectively remains locked for a longer duration from the perspective of the global start time $t=0$. By penalizing vertices proportionally to their visit index, GCP encourages lower-priority agents to find detours around nodes that are likely to cause long-duration waiting.

For a subsequent agent $a_i$, we define the inflated cost function $C_i: E_i \to \mathbb{R}_{\ge 0}$ by
increasing the cost of entering a vertex according to the penalties induced by higher-priority paths:
\begin{equation}\label{eq:inflated_cost}
C_i(u,v) \;=\; C(u,v) \;+\; \sum_{j=1}^{i-1} C_{\mathrm{p},j,v},
\qquad (u,v)\in E_i.
\end{equation}

We prevent penalizing transitions into goal vertices already reserved for
higher-priority agents, which are removed from $G_i$ and hence cannot be traversed.

Agent $a_{i}$ then computes a shortest path in the residual graph under the inflated costs:
\begin{equation}\label{eq:astar_gcp}
p_{i} \in \arg\min_{p:s_{i}\leadsto g_{i} \text{ in } G_i}
\sum_{(u,v)\in p} C_i(u,v),
\end{equation}
which is obtained by running A* on $G_i$ using any heuristic admissible with respect to $C_i$.
The resulting path $p_i$ is purely geometric and may include small detours that reduce overlap
with already planned paths.

The GCP stage is well-defined under Assumption~\ref{ass:residual_reach}: since $g_i$ is reachable
from $s_i$ in $G_i$ and all edge costs are finite and nonnegative, A* returns a finite-cost path.
Any residual conflicts are spatiotemporal and are handled in the DLC stage by inserting a finite number
of waiting actions, ensuring that every agent eventually reaches and remains at its goal.

\subsection{Decentralized Local Controller (DLC)}

The Decentralized Local Controller (DLC) converts the geometric paths $\mathcal{P}=\{p_1,\dots,p_k\}$ produced by GCP into collision-free trajectories by inserting wait actions only when required by local contention, see Algorithm~\ref{alg:dlc_execute}. DLC enforces mutual exclusion through per-vertex FIFO authorization queues and does not require a global space--time reservation table. While our simulations implement DLC in a centralized loop, the rule depends only on local queue membership and can be implemented in a distributed manner.

For each vertex \(v\in V\), DLC maintains a FIFO authorization queue \(Q_v\).
These queues are constructed offline from the set of geometric paths \(\mathcal{P}\) and the fixed priority ordering \(\pi=(a_1,\dots,a_k)\), where agents are indexed by priority. Specifically, for each agent \(a_i\) and for each vertex \(v\in\mathrm{nodes}(p_i)\), the agent \(a_i\) is appended to the tail of \(Q_v\). As a result, for every vertex that appears in multiple paths, \(Q_v\) encodes a total order of authorized entries that is consistent with the global priority ordering.

During execution, each agent $a_i$ follows its geometric path $p_i=(v_{i,0},v_{i,1},\dots,v_{i,\ell_i})$ and maintains an index $r_i$ indicating its current waypoint $v_{i,r_i}$. If $r_i<\ell_i$, the agent attempts to move from $u=v_{i,r_i}$ to the next vertex $v=v_{i,r_i+1}$. The move $u\to v$ is authorized if and only if $a_i$ is the first element of the FIFO queue $Q_v$. If authorized, the agent executes the move, increments $r_i$, and the authorization is consumed by popping the head of $Q_v$. Otherwise, the agent waits at $u$. Once an agent reaches its goal, it remains there thereafter.

This execution rule enforces mutual exclusion at vertices, since at most one agent can be authorized to enter a given vertex at any time step. Edge-swap conflicts are also prevented implicitly. A simultaneous swap on an edge $\{u,v\}$ would require two agents to be authorized to enter $u$ and $v$ at the same time. However, any two agents whose paths include both vertices appear in both queues
$Q_u$ and $Q_v$ in an order consistent with the global priority. While a higher-priority agent has not completed its visit to a vertex, it remains at the head of the corresponding queue, preventing a lower-priority agent from being authorized to enter the opposite endpoint. Then, opposite traversals on the same edge cannot occur.

Under Assumption~\ref{ass:residual_reach}, each agent has a finite geometric path in the residual graph and all higher-priority agents eventually remain parked at their goals. Consequently, all queues \(Q_v\) are finite and progress monotonically as agents complete their visits. Any temporary blocking is resolved by finite waiting, and the DLC stage always terminates, producing a collision-free trajectory for every agent.

\begin{proposition}[Correctness and Completeness of DLC Execution]
\label{prop:dlc_correctness}
Under Assumption~\ref{ass:residual_reach}, the execution produced by the
Decentralized Local Controller (DLC) is collision-free and complete.
That is, (i) no vertex or edge-swap conflicts occur, and (ii) every agent
reaches its goal in finite time and remains there thereafter.
\end{proposition}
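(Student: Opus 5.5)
The plan is to prove the two parts separately. Part (i) is an invariant argument over the execution steps. Part (ii) is an induction on the priority index $i$, driven by a potential function built from the queues $Q_v$.

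\textbf{Invariant setup.} Before either part, I will fix precise semantics of ``authorization'', because the informal rule as stated is not quite enough. Popping $Q_v$ at entry does not by itself guarantee that $v$ is empty: the previous occupant may still be there. I will therefore read the rule as follows. Agent $a_i$ holds the token of $v$ from the moment it enters $v$ until it leaves $v$, and only then is the head of $Q_v$ released to the next agent. Start vertices are treated as initially held by their owners. With this reading, the invariant to prove by induction on $t$ is:
\begin{itemize}
\item at every time $t$, each vertex is held by at most one agent;
\item every agent occupies exactly one vertex, which it holds.
\end{itemize}

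\textbf{Part (i).} Vertex-conflict freedom follows from the invariant at once. For edge swaps, suppose $a_i$ moves $u\to v$ and $a_j$ moves $v\to u$ at the same step. Then at time $t$, $a_j$ holds $v$. Under the hold-until-departure semantics, $a_i$ cannot be authorized to enter $v$ at that step. This is cleaner than the paper's priority-ordering remark, which I would keep only as intuition.

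\textbf{Part (ii).} I will show that agent $a_i$ reaches $g_i$ in finite time, assuming this already holds for $a_1,\dots,a_{i-1}$. The ingredients are:
\begin{itemize}
\item Every queue is finite, since each $p_j$ is a finite path produced by GCP, which is well-defined under Assumption~\ref{ass:residual_reach}.
\item Each queue contains the entries in increasing priority order, by the construction loop in Algorithm~\ref{alg:gcp_dlc_high_level}.
\item Every entry of $Q_v$ belonging to $a_j$ with $j<i$ is eventually consumed, because $a_j$ finishes its path.
\item No higher-priority agent ever blocks $a_i$ permanently, because $p_i$ lies in $G_i$ and so avoids all the goals $F_i$ where those agents park.
\end{itemize}
Consequently, after the finishing time of $a_{i-1}$, every head of a queue on $p_i$ is either $a_i$ or a lower-priority agent. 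For the potential function I will use the total number of unconsumed queue entries, which strictly decreases whenever any agent moves. I then need to rule out a state in which no agent can move, i.e.\ a deadlock.

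\textbf{Main obstacle.} The hard case is an agent $a_j$ with $j>i$ that has not moved yet and sits on a vertex of $p_i$. A typical instance is $s_j\in\mathrm{nodes}(p_i)$. There $a_i$ may be at the head of $Q_{s_j}$ but cannot enter, while $a_j$ waits for a vertex whose queue is headed by $a_i$. Assumption~\ref{ass:residual_reach} constrains only goals, so it does not obviously exclude this.

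I would try first to show that such a wait-for cycle contradicts the priority-consistent ordering of the queues: follow the cycle and extract a strictly decreasing sequence of priorities. If that fails, I would make explicit that starts of lower-priority agents must also be vacated or avoided, for instance by including $\{s_{i+1},\dots,s_k\}$ in the obstacle set, or by an equivalent condition. I would then close the induction, with termination following from the finite potential.
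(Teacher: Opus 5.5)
\textbf{Verdict: there is a genuine gap, and it cannot be closed under the stated hypotheses, because the statement is false as written.} Both of your diagnoses are right, but the proposal stops exactly where the statement breaks. The route you list first for the hard case cannot succeed.

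\textbf{Why the wait-for-cycle route fails.} In a wait-for cycle, an edge of the form ``$a$ is not the head of $Q_v$'' always points to a strictly higher-priority agent. An edge of the form ``$a$ heads $Q_v$, yet $v=s_j$ is still occupied by $a_j$'' can point from higher to lower priority, because start occupancy is recorded in no queue. So no decreasing priority sequence can be extracted, and such cycles do arise under Assumption~\ref{ass:residual_reach}.

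\textbf{Counterexample.} Take the graph with edges $\{x,b\},\{b,w\},\{w,y\},\{w,z\}$, agent $a_1$ from $x$ to $y$, and agent $a_2$ from $b$ to $z$.
\begin{itemize}
\item The assumption holds for $\pi=(a_1,a_2)$.
\item GCP is forced to return $p_1=(x,b,w,y)$ and $p_2=(b,w,z)$, giving $Q_b=[a_1]$ and $Q_w=[a_1,a_2]$.
\item Under your hold-until-departure reading, $a_1$ waits for $a_2$ to vacate $b$, while $a_2$ waits for $a_1$ to clear $w$. This is a deadlock.
\item Under the literal Algorithm~\ref{alg:dlc_execute}, which pops on entry, $a_1$ enters $b$ at $t=1$ while $a_2$ is still there. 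This is a vertex conflict.
\end{itemize}

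\textbf{Consequences for the paper.} The paper's proof has exactly the two holes you suspected. First, ``at most one agent is authorized to enter $v$'' does not give ``at most one agent occupies $v$''. Second, ``every queue progresses monotonically'' ignores start occupancy. The same instance also breaks the sequential execution asserted right after Assumption~\ref{ass:residual_reach}. Moreover, the paper's own description of swap prevention (a higher-priority agent remains at the head of the queue until it completes its visit) tacitly uses your pop-on-departure reading, not Algorithm~\ref{alg:dlc_execute}.

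\textbf{What can be proved instead.} Your fallback is therefore not optional. What can be proved is a corrected statement with two changes: $Q_v$ is popped on departure, and $p_i$ is planned in $G_i$ with $\{s_{i+1},\dots,s_k\}$ also removed (reachability assumed there). Both changes are needed. Start avoidance alone does not rescue pop-on-entry, since an agent that enters $v$ and then waits there has already handed $v$ to the next agent in $Q_v$.

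With both changes, your induction closes directly. Suppose $a_1,\dots,a_{i-1}$ are parked, and consider a not-yet-visited vertex of $p_i$.
\begin{itemize}
\item \emph{$a_i$ heads the queue.} Every entry preceding $a_i$'s in that vertex's queue belongs to a higher-priority agent that has since departed the vertex; higher-priority goals are off $p_i$ by construction of $G_i$. So that entry has been popped, and $a_i$ is the head.
\item \emph{The vertex is not held.} A lower-priority agent could enter it only through a queue in which $a_i$ precedes it, and lower-priority starts are off $p_i$.
\end{itemize}
Hence $a_i$ advances every tick, and by induction $T_i\le\sum_{h\le i}\ell_h$.

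\textbf{One slip to fix.} Under pop-on-departure, the number of unconsumed queue entries does not drop when an agent leaves its start. Use $\sum_i(\ell_i-r_i)$ as the potential instead, or simply the bound above.
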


\begin{IEEEproof}
Collision-freeness follows from the per-vertex FIFO authorization rule.
At any time step, at most one agent can be authorized to enter a vertex,
which prevents vertex conflicts. Edge-swap conflicts are excluded as follows. A swap on $\{u,v\}$ at time $t$ would require two simultaneous moves $u\!\to\! v$ and $v\!\to\! u$, hence that the corresponding agents are simultaneously authorized to enter $v$ and to enter $u$, i.e., they are the first elements of $Q_v$ and $Q_u$ at the same time. But the queues are constructed by appending agents in priority order for every vertex visit, hence for any two agents that both visit $u$ and $v$, their relative order is identical in $Q_u$ and in $Q_v$.
Therefore, the lower-priority agent cannot become first in one of these queues while the higher-priority agent is still pending, so opposite traversals cannot be simultaneously authorized.

Completeness follows from Assumption~\ref{ass:residual_reach}.
Each agent has a finite geometric path in the residual graph and all higher-priority agents eventually remain parked at their goals. Consequently, every queue
$Q_v$ is finite and progresses monotonically. Any waiting induced by local
contention is finite, and each agent advances along its path in finite time
until reaching its goal, where it remains permanently.
\end{IEEEproof}

\begin{algorithm}
\caption{\textsc{DLC-Execute}: local queue-based execution}
\label{alg:dlc_execute}
\begin{algorithmic}[1]
\REQUIRE Paths $\mathcal{P}=\{p_1,\dots,p_k\}$ and queues $\{Q_v\}_{v\in V}$
\ENSURE Collision-free trajectories $\mathcal{T}=\{\tau_1,\dots,\tau_k\}$

\STATE For each agent $a_i$: set $r_i\gets 0$ and $\tau_i(0)\gets v_{i,0}$
\STATE $t\gets 0$
\WHILE{$\exists i \text{ with } r_i<\ell_i$}
    \STATE $t\gets t+1$
    \FOR{$i=1$ \TO $k$}
        \IF{$r_i=\ell_i$}
            \STATE $\tau_i(t)\gets g_i$
        \ELSE
            \STATE $u \gets v_{i,r_i}$;\quad $v \gets v_{i,r_i+1}$
            \IF{$a_i$ is the first element of $Q_v$}
                \STATE $\tau_i(t)\gets v$;\quad $r_i\gets r_i+1$
                \STATE Pop the first element of $Q_v$ \COMMENT{consume authorization to enter $v$}
            \ELSE
                \STATE $\tau_i(t)\gets u$
            \ENDIF
        \ENDIF
    \ENDFOR
\ENDWHILE
\RETURN $\mathcal{T}$
\end{algorithmic}
\end{algorithm}

\section{Experimental Evaluation}

This section evaluates the proposed framework. We first compare against representative state-of-the-art MAPF solvers, we then analyze how the behavior of the proposed method varies with its internal design choices. In our simulator, time advances in discrete ticks and each agent executes exactly one action (move or wait) per tick. All simulations have been performed on a workstation equipped with an Intel Core i7 CPU (3GHz), 16GB of RAM, running Ubuntu 22.04 LTS.

\textbf{Experimental Setup and Baselines.} Experiments are conducted on two standard benchmark maps: (i) \emph{Paris\_1\_256}, a large open-space $256 \times 256$ grid that primarily stresses scalability, and (ii) \emph{room-64-64-8}, a $64 \times 64$ map composed of rooms connected by narrow corridors, which induces frequent conflicts. For each map and agent count $k$, start vertices $s_i$ and goal vertices $g_i$ are sampled uniformly at random, subject to being pairwise distinct. The proposed method is compared against the following baselines:
\begin{itemize}
    \item \emph{ECBS} \cite{EECBS}: a bounded-suboptimal centralized solver based on Conflict-Based Search.
    \item \emph{Cooperative A* (CA*)} \cite{P-MAPF}: a classic prioritized planner using a space--time
    reservation table.
    \item \emph{PIBT} \cite{PIBT}: a fast decoupled method based on priority inheritance and backtracking.
\end{itemize}

We report three standard performance metrics:
(i) \emph{Success Rate (SR)}, defined as the percentage of instances solved within a fixed time limit;
(ii) \emph{Runtime}, defined as the wall-clock time (in seconds) required to compute a solution;
and (iii) \emph{Sum-of-Costs (SOC)}, defined as the total number of actions executed by all agents
until they reach their goals.

\begin{figure}[h!]
    \centering
    \includegraphics[width=0.9\columnwidth]{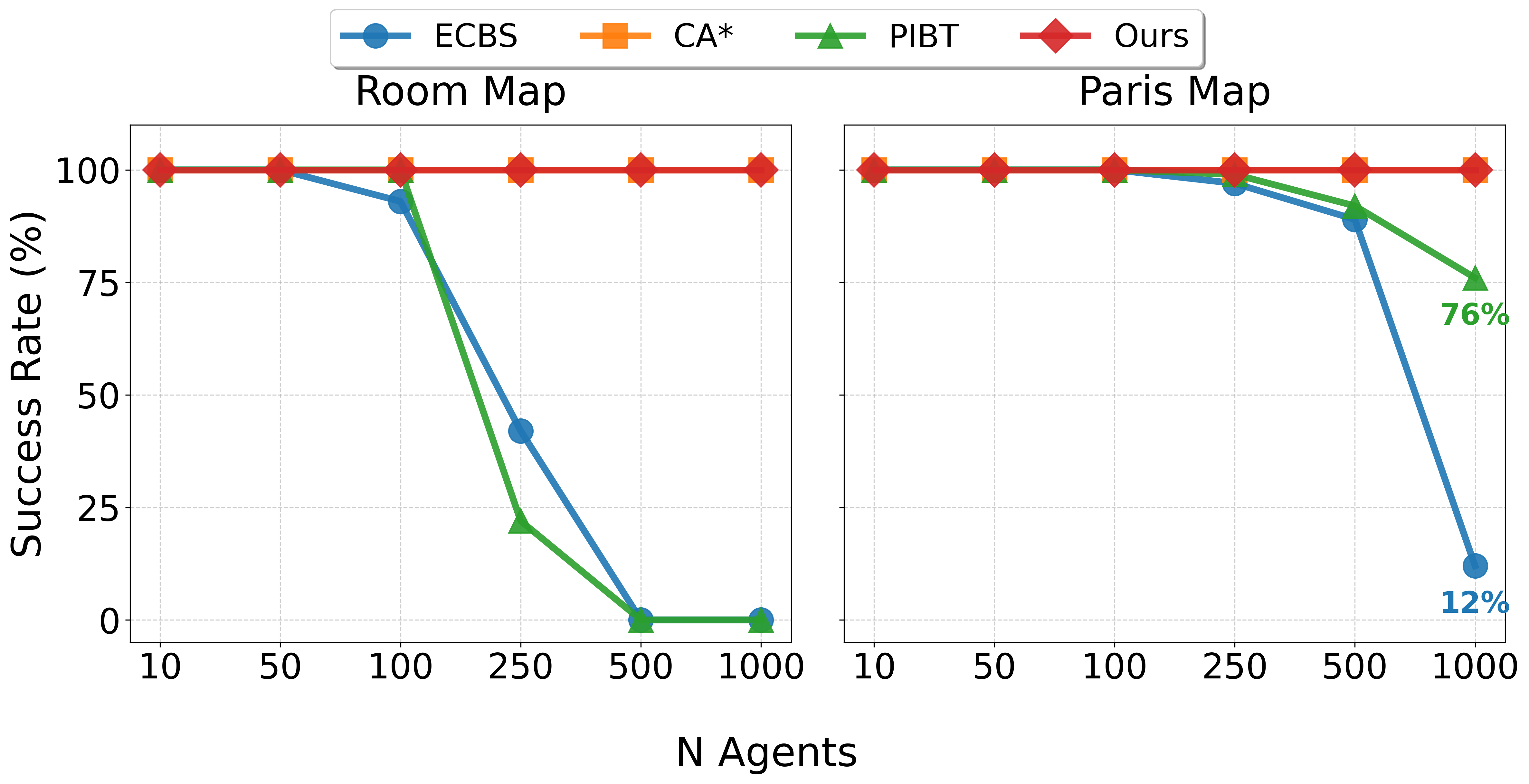}
    \caption{Success rate comparison against methods \cite{EECBS}, \cite{P-MAPF}, \cite{PIBT} (CA* values are consistently 100\% and are covered by the red line of our method) (Mean values over 100 iterations).}
    \label{fig:SR_combined}
\end{figure}

\textbf{Comparison with State-of-the-Art Methods.} The proposed method achieves a $100\%$ success rate on all instances under Assumption~\ref{ass:residual_reach} (see Figure~\ref{fig:SR_combined}), remaining stable even
for large teams in the bottleneck-heavy map. In contrast, ECBS and PIBT exhibit a marked degradation
in success rate on \emph{room-64-64-8} at high agent densities, reflecting the difficulty of resolving
conflicts under severe congestion.

\begin{figure}[h!]
    \centering
    \includegraphics[width=0.9\columnwidth]{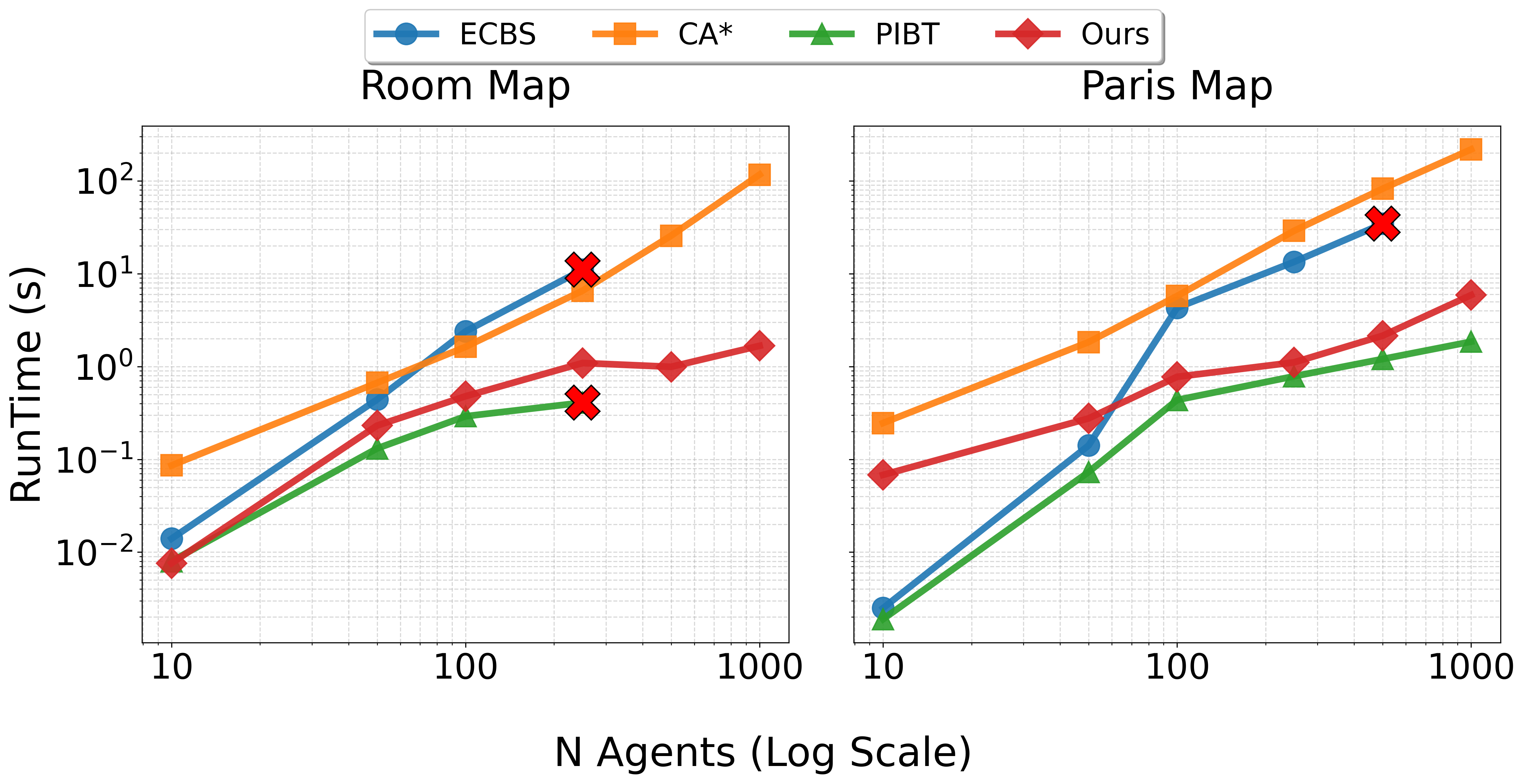}
    \caption{Runtime comparison against methods \cite{EECBS}, \cite{P-MAPF}, \cite{PIBT} (both axes in logarithmic scale). The red 'X' indicates that the algorithm failed to complete due to timeout (Mean values over 100 iterations).}
    \label{fig:runtime_combined}
\end{figure}

In terms of runtime, as shown in Figure~\ref{fig:runtime_combined}, the proposed approach exhibits an empirically near-linear scaling trend and achieves the second-best performance overall, slightly behind PIBT. This efficiency stems from avoiding time-expanded graph construction and resolving conflicts through sparse, local queue operations in the DLC stage.

\begin{figure}[h!]
    \centering
    \includegraphics[width=0.9\columnwidth]{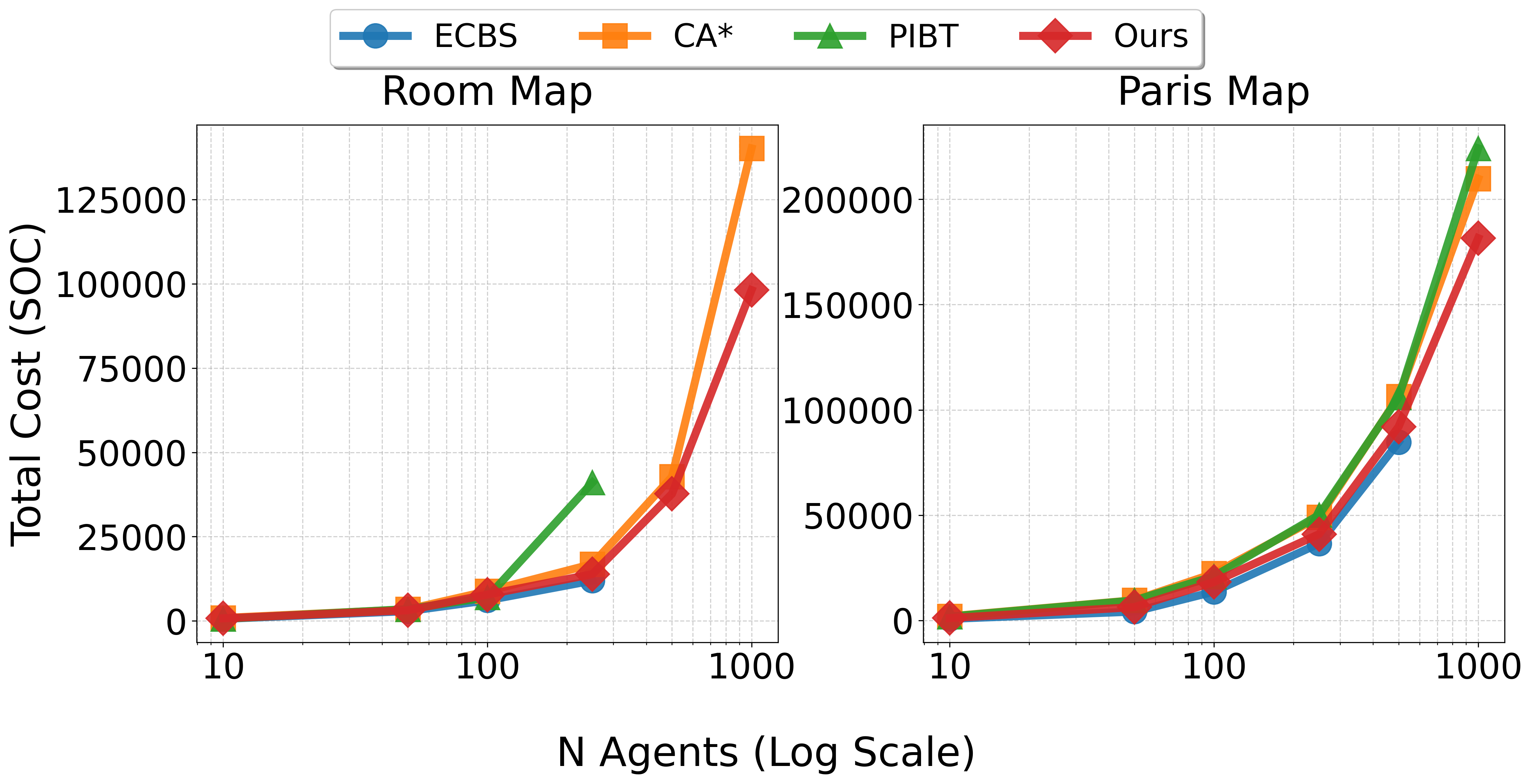}
    \caption{Cost comparison against methods \cite{EECBS}, \cite{P-MAPF}, \cite{PIBT} for Room and Paris maps (Mean values over 100 iterations).}
    \label{fig:SOC_combined}
\end{figure}

Regarding solution quality, as illustrated in Figure~\ref{fig:SOC_combined}, ECBS typically achieves the lowest SOC when it succeeds. Nevertheless,
the proposed method consistently outperforms CA* and PIBT and remains competitive with ECBS.
The advantage is most pronounced in constrained environments, where geometric conflict preemption
significantly reduces execution-time waiting, which is often the dominant contributor to SOC.

\textbf{Impact of Dynamic Costs.} We next analyze the effect of geometric cost inflation through an ablation study. We distinguish between \emph{spatial cost}, defined as the number of vertices in the planned trajectory, and \emph{temporal cost}, representing the total number of wait actions inserted by the DLC stage to resolve conflicts. 
Table~\ref{tab:room_cost_breakdown_ext} compares the full method (\emph{Cost Update}) against an ablated variant (\emph{No Cost Update}) in which the penalty parameter $C_{\mathrm{p}}$ is set to zero. 

Results on the bottleneck-heavy map show that enabling GCP reduces the total cost
in high-density scenarios. At $k=1000$, the total SOC is reduced by $23.1\%$.
This improvement reflects a clear trade-off: a moderate increase in spatial cost due to detours yields a large reduction in temporal cost of about $70\%$.
These results confirm that preemptive geometric detours are significantly cheaper than resolving
repeated conflicts through temporal synchronization.

\begin{table*}[h!] 
    \centering 
    \caption{Spatial and temporal cost analysis comparing the proposed method with and without geometric cost inflation. The results illustrate the trade-off between spatial detours and reduced temporal synchronization (waiting) for collision avoidance (\emph{Room map}, 30.94\% of free space occupied with 1000 agents). Percentages in the first two result columns indicate the relative increase with respect to the alternative configuration.}
    \label{tab:room_cost_breakdown_ext}
    
    \begin{tabular}{l c c c S[table-format=6.0] l l}
        \toprule
        $k$ & \textbf{Method} & {\textbf{Diff. of Spatial Cost}} & {\textbf{Diff. of Temporal Cost}} & {\textbf{Total Cost}} & \textbf{\% of Spatial Cost} & \textbf{\% of Temporal Cost} \\
        \midrule
        
        \multirow{2}{*}{250} & Cost Update & 16\% & {---} & 18518 & 89.8\% & 10.2\% \\
         & No Cost Update & {---} & 67\% & \best{17529} & 82\% & 18\% \\
        \midrule
        
        \multirow{2}{*}{500} & Cost Update & 12\% & {---} & \best{40192} & 76.8\% & 23.2\% \\
         & No Cost Update & {---} & 201\% & 46363 & 59.4\% & 40.6\% \\
        \midrule
        
        \multirow{2}{*}{1000} & Cost Update & 16\% & {---} & \best{106843} & 47.6\% & 52.4\% \\
         & No Cost Update & {---} & 70\% & 138874 & 31.6\% & 68.4\% \\
        \bottomrule
    \end{tabular}
\end{table*}

\textbf{Impact of cost inflating in ordering.} A priority policy defines the ordering $\pi=(a_1,\dots,a_k)$ used by the prioritized planner. 
Each heuristic assigns to every agent $a_i$ a scalar key $\kappa(a_i)$ computed from the static map
and the agent's start/goal pair; agents are then sorted by $\kappa$. Under Assumption~\ref{ass:residual_reach}, the existence of a solution is guaranteed for multiple priority orderings $\pi$. Consequently, the choice of priority policy does not dictate the solvability of the instance, but rather its optimality and the specific spatial-temporal trade-offs made by the GCP, for this reason, a study of the priority heuristics is performed.



Let $d_i$ be the shortest-path distance from $s_i$ to $g_i$. We define the shortest-path corridor $\mathcal{C}_i$ as the set of all nodes lying on any shortest path for agent $a_i$:
\begin{equation}
    \mathcal{C}_i := \{v \in V \mid \mathrm{dist}(s_i,v) + \mathrm{dist}(v,g_i) = d_i\}.
\end{equation}
We define the conflict score $c_i$ based on a single \emph{nominal shortest path} $\hat{p}_i$. The score $c_i$ measures the potential interference this nominal path faces from the corridors of all other agents:
\begin{equation}
    c_i := \sum_{v \in \mathrm{nodes}(\hat{p}_i)} \Big| \{j \neq i \mid v \in \mathcal{C}_j\} \Big|.
\end{equation}
Using $d_i$ and $c_i$, we evaluate five priority policies: \emph{SPF (Shortest Path First)}, which sorts agents by increasing $d_i$; \emph{LPF (Longest Path First)}, by decreasing $d_i$; \emph{CF (Conflicting Path First)}, by decreasing $c_i$; \emph{CL (Conflicting Path Last)}, by increasing $c_i$; and \emph{Random}, using a random permutation.

Table~\ref{tab:priority_comparison} evaluates the influence of different priority heuristics on solution quality. Across both maps, the CL heuristic achieves the lowest SOC in high-density scenarios.
This effect is particularly strong in the constrained map, where prioritizing agents with
immediate conflicts allows cleaner paths to be fixed early, leaving more constrained agents
to be regulated through the resulting vertex queues.
For example, at $k=1000$ on \emph{room-64-64-8}, LPF yields an SOC of $117995$, which is $18\%$ higher
than the $99923$ achieved by CL.

\begin{table*}[h!]
    \centering
    \caption{Comparison of results for different priority policies across maps and agent counts. Percentage values indicate the relative increase of each metric with respect to the best-performing policy.}
    \label{tab:priority_comparison}
    
    \begin{tabular}{l c c c c | l c c c c}
        \toprule
        \multicolumn{5}{c}{\textbf{Paris Map}} & \multicolumn{5}{c}{\textbf{Room Map}} \\
        \midrule
        $k$ & \textbf{Priority} & \textbf{Diff. Solved Colls \%} & \textbf{Diff Cost \%} & \textbf{Total Cost} & 
        $k$ & \textbf{Priority} & \textbf{Diff. Solved Colls \%} & \textbf{Diff Cost \%} & \textbf{Total Cost} \\
        \midrule
        
        \multirow{5}{*}{100} & SPF & 100\% & --- & \best{18092} & 
        \multirow{5}{*}{100} & SPF & 9\% & 4\% & 6982 \\
         & LPF & 335\% & 6\% & 19211 & 
         & LPF & 13\% & 2\% & 6851 \\
         & CF & 40\% & 9\% & 19723 & 
         & CF & 16\% & 12\% & 7541 \\
         & CL & 110\% & 2\% & 18446 & 
         & CL & --- & --- & \best{6723} \\
         & Random & --- & 12\% & 20356 & 
         & Random & 23\% & 11\% & 7448 \\
        \midrule
        
        \multirow{5}{*}{500} & SPF & 9\% & 7\% & 98811 & 
        \multirow{5}{*}{500} & SPF & 12\% & 4\% & 39135 \\
         & LPF & 7\% & 9\% & 100617 & 
         & LPF & 62\% & 16\% & 43799 \\
         & CF & 6\% & 11\% & 101003 & 
         & CF & 41\% & 15\% & 43496 \\
         & CL & --- & --- & \best{92367} & 
         & CL & --- & --- & \best{37822} \\
         & Random & 12\% & 8\% & 99552 & 
         & Random & 28\% & 8\% & 40914 \\
        \midrule
        
        \multirow{5}{*}{1000} & SPF & 9\% & 4\% & 189337 & 
        \multirow{5}{*}{1000} & SPF & 8\% & 6\% & 105546 \\
         & LPF & 13\% & 6\% & 192448 & 
         & LPF & 19\% & 18\% & 117995 \\
         & CF & 11\% & 5\% & 191862 & 
         & CF & 17\% & 15\% & 114721 \\
         & CL & --- & --- & \best{181737} & 
         & CL & --- & --- & \best{99923} \\
         & Random & 16\% & 8\% & 195453 & 
         & Random & 5\% & 3\% & 102744 \\
        \bottomrule
    \end{tabular}
    
\end{table*}

\section{Conclusions} \label{sec:conc}

We presented a prioritized MAPF framework that separates geometric planning from execution-time coordination. Geometric Conflict Preemption discourages overlap with higher-priority paths through a one-shot cost inflation on the original graph without time modeling. The Decentralized Local Controller then enforces collision-free execution through local FIFO queues and introduces waiting only under contention.
The evaluation shows that this decomposition yields stable scalability at large team sizes while preserving competitive SOC, with the largest gains arising in bottleneck-heavy environments where waiting dominates the cost. Future work includes adaptive tuning of the inflation parameter, online priority re-ordering based on observed contention.

\end{document}